\newcommand{\ket}[1]{\left| #1 \right>}
\newcommand{\bra}[1]{\left< #1 \right|}
\newcommand{\proj}[1]{\left| #1 \right> \! \left< #1 \right|}
\newcommand{\braket}[2]{\left< #1 | #2 \right>}
\newcommand{\hilb}[1]{\mathscr{H}_{#1}}
\newcommand{\basis}[1]{\mathscr{B}_{#1}}
\newcommand{\real}[0]{\mathbb{R}}
\newcommand{\cpx}[0]{\mathbb{C}}
\newcommand{\rk}{\operatorname{rank}}
\newcommand{\vecspan}{\operatorname{span}}
\newcommand{\supp}{\operatorname{supp}}
\newcommand{\trace}{\operatorname{Tr}}
\newcommand{\tr}{\operatorname{Tr}}
\newcommand{\1}{\openone}
\theoremstyle{plain} \newtheorem{thm}{Theorem}
\theoremstyle{plain} \newtheorem*{thm'}{Theorem 7$'$}
\theoremstyle{plain} 
\theoremstyle{definition} \newtheorem*{rmk}{Remark}
\theoremstyle{definition} 
\theoremstyle{plain} \newtheorem{lemma}[thm]{Lemma}
\theoremstyle{definition} 
\theoremstyle{plain} 
\theoremstyle{plain} \newtheorem{hyp}{Hypothesis}
\begin{document}

\title{Inequalities for the Ranks of Quantum States}

\author{Josh Cadney}
\affiliation{School of Mathematics, University of Bristol, Bristol BS8 1TW, United Kingdom}

\author{Marcus Huber}
\affiliation{F\'isica Te\`{o}rica: Informaci\'o i Fenomens Qu\`{a}ntics, Universitat Aut\'{o}noma de Barcelona, ES-08193 Bellaterra (Barcelona), Spain}
\affiliation{School of Mathematics, University of Bristol, Bristol BS8 1TW, United Kingdom}
\affiliation{ICFO-Institut de Ci\`encies Fot\`oniques, 08860 Castelldefels, Barcelona, Spain}

\author{Noah Linden}
\affiliation{School of Mathematics, University of Bristol, Bristol BS8 1TW, United Kingdom}

\author{Andreas Winter}
\affiliation{ICREA -- Instituci\'{o} Catalana de Recerca i Estudis Avan\c{c}ats, Pg.~Lluis Companys 23, ES-08010 Barcelona, Spain}
\affiliation{F\'isica Te\`{o}rica: Informaci\'o i Fenomens Qu\`{a}ntics, Universitat Aut\'{o}noma de Barcelona, ES-08193 Bellaterra (Barcelona), Spain}
\affiliation{School of Mathematics, University of Bristol, Bristol BS8 1TW, United Kingdom}

\date{23 July 2013}

\begin{abstract}
We investigate relations between the ranks of marginals of multipartite quantum states. 
These are the Schmidt ranks across all possible bipartitions and constitute a natural 
quantification of multipartite entanglement dimensionality. We show that there exist 
inequalities constraining the possible distribution of ranks. This is analogous to the 
case of von Neumann entropy ($\alpha$-R\'enyi entropy for $\alpha=1$), where nontrivial 
inequalities constraining the distribution of entropies (such as e.g. strong subadditivity) 
are known. It was also recently discovered that all other $\alpha$-R\'enyi entropies 
for $\alpha\in(0,1)\cup(1,\infty)$ satisfy only one trivial linear inequality 
(non-negativity) and the distribution of entropies for $\alpha\in(0,1)$ is completely 
unconstrained beyond non-negativity. 
Our result resolves an important open question by showing that also the 
case of $\alpha=0$ (logarithm of the rank) is restricted by nontrivial linear relations 
and thus the cases of von Neumann entropy (i.e., $\alpha=1$) and $0$-R\'enyi entropy 
are exceptionally interesting measures of entanglement in the multipartite setting.
\end{abstract}

\maketitle

\section{Introduction}
Entanglement is ubiquitous in multi-party quantum systems; today it is recognized 
to play a fundamental role as a resource in quantum information theory. 
Therefore there have been intensive investigations into quantifying this resource 
in an operational way (cf.~\cite{Horodqe} for a recent review). Apart from its 
applications in metrology and communication it is also necessary for exponential 
speedup in quantum computation \cite{Vid03}. Many continuous measures of entanglement 
however may be polynomially small in the size of the system while still admitting 
an exponential speedup \cite{vandenNest}. In this scenario the Schmidt rank 
constitutes a natural measure of bipartite entanglement, but more importantly in 
terms of a resource theory it reveals the minimum entanglement dimensionality 
required to create a state via stochastic local operations and classical communication 
(SLOCC) \cite{LP01}. 
The Schmidt rank is the number of non-zero terms in the Schmidt decomposition of 
$\ket{\psi}_{AB}$, or more simply, the minimum number of terms in a decomposition
of $\ket{\psi}_{AB} = \sum_{i=1}^r \ket{\alpha_i}_A \ket{\beta_i}_B$ into a sum of 
product vectors. Equivalently, it is the rank of the reduced density matrix 
$\psi_A := \trace_B \ket{\psi}\!\bra{\psi}$. 

Recently, this idea has been generalized to the multipartite setting by the 
introduction of the \emph{rank vector} \cite{HdV13,HPdV}. The rank vector is a list 
of the Schmidt ranks across each possible bipartition of the state. Each 
element of this vector constitutes an SLOCC monotone and it can be used to 
reveal the dimensionality of genuine multipartite entanglement. For example, 
a tripartite state $\ket{\psi}_{ABC}$ has three bipartitions, $A:BC$, 
$B:AC$ and $AB:C$, giving three different ranks: $r_A$, $r_B$ and $r_{AB}$. 
It is then natural to ask what relations these ranks must satisfy. 
For example, the inequality
\begin{equation} \label{ranksubmult}
	r_{AB}\leq r_A \, r_B
\end{equation}
is a simple consequence of the relation
\begin{equation} \label{supporteq}
	\supp(\psi_{AB}) \subseteq \supp(\psi_A) \otimes \supp(\psi_B),
\end{equation}
where $\supp(\sigma)$ denotes the support of the density matrix $\sigma$. 
In \cite{HdV13} it is shown that for tripartite pure states this is the 
only relevant constraint on the ranks, however, it is conjectured that for 
four and more parties there are further inequalities. Finding these would 
not only reveal a great amount of structure in multipartite Hilbert spaces, 
but also give a rich set of constraints for distributing entanglement in 
multipartite systems in the flavour of generalized monogamy relations. 
For example, the inequality above makes an interesting physical statement: 
The product of the entanglement dimensionality of $A$ with $BC$ and the 
entanglement dimensionality of $B$ with $AC$, is an upper bound to the 
entanglement dimensionality of $C$ with $AB$.

An at first glance quite separate research topic is the study of entropy 
inequalities, which is important in both classical \cite{YZ98,Mat07,DFZ11} 
and quantum \cite{Pip03,LW05,CLW12} information theory. Given a multipartite 
system, how do the entropies of the different subsystems relate to each other? 
For the case of the von Neumann entropy, $S(\rho) = - \tr\rho\log\rho$,
they must satisfy the well-known strong subadditivity and weak monotonicity 
relations \cite{LR73}
\begin{align}
	S(\rho_A)+S(\rho_{ABC}) &\leq S(\rho_{AB})+S(\rho_{AC}), \label{SSA} \\
	S(\rho_A)+S(\rho_B) &\leq S(\rho_{AC})+S(\rho_{BC}).     \label{WM}
\end{align}
For four-party pure states these are the only constraints \cite{Pip03}, 
but it is a major open question to determine whether or not further inequalities 
exist for five or more parties \cite{magnificent-7,Ibi07}.

The von Neumann entropy is a special case of the family of quantum R\'enyi 
entropies. For $\alpha\in(0,1)\cup(1,\infty)$ the $\alpha$-R\'enyi entropy, 
$S_\alpha$, of a quantum state, $\rho$, is defined by
\begin{equation}
	S_\alpha(\rho) := \frac{1}{1-\alpha} \log \trace \rho^\alpha.
\end{equation}
where the logarithm is taken to base 2. The von Neumann entropy is the $\alpha=1$ R\'enyi entropy, in the sense 
that $\lim_{\alpha\to1} S_\alpha(\rho) = S(\rho)$. Recently, all (linear) 
entropy inequalities for the R\'enyi entropy were found for 
$\alpha\in(0,1)\cup(1,\infty)$ \cite{LMW12}. The surprising result was that 
the only such inequality for these entropies is non-negativity: 
$S_\alpha(\rho_A)\geq 0$. In \cite{LMW12} the special case $\alpha=0$ is 
also posed as an open question, where $S_0$ is defined by
\begin{equation}
	S_0(\rho) := \lim_{\alpha\to0} S_\alpha(\rho) = \log \rk \rho.
\end{equation}
Since the $0$-entropy is nothing but 
the logarithm of the rank of a state, it is clear that $0$-entropy 
inequalities and rank inequalities are essentially equivalent. Notice that 
inequality \eqref{ranksubmult} corresponds exactly to
\begin{equation}
	S_0(\rho_{AB}) \leq S_0(\rho_A) + S_0(\rho_B),
\end{equation}
so the $0$-entropy case is already more complex than that of 
$\alpha\neq 0,1$ as it obeys \emph{subadditivity}. It is natural to ask 
whether or not the $0$-entropy also satisfies strong subadditivity; in fact, 
it is easy to see that it does not (but cf.~also our results below):
a counterexample is given by the purification $\ket{\psi}_{ABCD}$ of
\begin{equation}
  \rho_{ABC} = \frac15\left( \proj{000} + \proj{100} +\proj{101} + \proj{110} + \proj{111} \right),
\end{equation}
which has $r_A = 2$, $r_{ABC} = 5$, and $r_{AB} = r_{AC} = 3$, hence
$S_0(A) + S_0(ABC) \not\leq S_0(AB) + S_0(AC)$.
However, are there any other further inequalities?

\medskip
In this paper we address this question, raised in both \cite{HdV13} and \cite{LMW12}, 
by proving two new inequalities for the ranks of a four-party quantum state
(section~\ref{sec:new}), after introducing a framework adapted to studying the
question of universal inequalities in section~\ref{sec:ranks}. 
Interestingly, the key ingredient in our proofs is the strong subadditivity of the
von Neumann entropy. We then construct some states with interesting ranks, which 
violate some other inequalities, including one conjectured in \cite{HdV13} 
(section~\ref{sec:rays}). 
Finally, we present another inequality, which, if true, would complete the picture 
in the four-party case. We have so far been unable to find a general proof 
of this inequality, or a counterexample for that matter, but we present a proof 
for certain special cases (section~\ref{sec:hypothesis}). We conclude 
in section~\ref{sec:conclusion}, referring several open problems to the
attention of the reader.

\section{The set of rank vectors}
\label{sec:ranks}
Given an $n$-party pure quantum state $\ket{\psi}_{X_1\ldots X_n}$, there are $2^{n-1}-1$ ways to bipartition the parties $X_1,\ldots,X_n$ in a non-trivial way. We define the rank vector of $\ket{\psi}$ to be the vector in $\mathbb{N}^{2^{n-1}-1}$ which lists the Schmidt ranks of $\ket{\psi}$ across each of these bipartitions (in some fixed order) \footnote{Note that this is slightly different from \cite{HdV13}, where the rank vector is formed by listing the ranks in decreasing order}. For example, the four-party state $\ket{\psi}_{ABCD}$ has rank vector
\begin{equation}
	r_\psi = (r_A,r_B,r_C,r_D,r_{AB},r_{AC},r_{AD})
\end{equation}
Similarly, the $0$-entropy vector of $\ket{\psi}_{ABCD}$ is
\begin{equation}
	v_\psi = (\log r_A,\log r_B,\log r_C,\log r_D,\log r_{AB},\log r_{AC},\log r_{AD})
\end{equation}

We are interested in determining which vectors are rank/$0$-entropy vectors. Therefore, we define a set $\Sigma_n$ which is the set of all $n$-party rank vectors. More precisely,
\begin{equation}
	\Sigma_n = \left\{ u\in\mathbb{N}^{2^{n-1}-1} : \exists d_1,\ldots,d_n\in\mathbb{N}, \ket{\psi}\in\cpx^{d_1}\otimes\ldots\otimes\cpx^{d_n} \text{ such that } u = r_\psi \right\}
\end{equation}
and in a completely analogous way we define $\Omega_n$ to be the set of all $n$-party $0$-entropy vectors.

For $0<\alpha\leq1$ it is known that, for any number of parties, the closure of the set of $\alpha$-entropy vectors is a convex cone (i.e. it is closed under addition and multiplication by positive real scalars) \cite{LMW12}. This means that it can be characterized only in terms of the \emph{linear} inequalities between the $\alpha$-entropy of different parts of system. Since $\Omega_n$ is a discrete set it is clearly not a cone. However, the results of \cite{HdV13} imply that $\Omega_3$ is the intersection of a cone with the set of log-integer points. In particular, let $\mathscr{C}$ be the closed cone defined by
\begin{equation}
	\mathscr{C}:= \{ (x,y,z)\in\real^3 : x,y,z\geq0, x\leq y+z, y\leq x+z, z\leq x+y \}
\end{equation}
and let $\log \mathbb{N}^3$ denote the set of log-integer points, then $\Omega_3=\mathscr{C}\cap\log\mathbb{N}^3$.
For example, if $a,b,c\in\mathbb{N}$ are such that $(\log a,\log b,\log c)\in\mathscr{C}$ with $a\leq b \leq c$ then by taking the state
\begin{equation}
	\ket{\psi}_{ABC} = \sum_{i=1}^a \sum_{j=1}^b \ket{i}_A \ket{j}_B \ket{(i-1)b+j}_C,
\end{equation}
and removing appropriate terms from the sum, we can arrive at a state with rank vector $(a,b,c)$.

It turns out that a similar statement does not hold for $\Omega_n$ with $n\geq4$.
However, it is true that $\Omega_n$ is still closed under addition, and hence also 
multiplication by positive integer scalars. For, if $\ket{\psi}$ and $\ket{\phi}$ 
have $0$-entropy vectors $v_\psi$ and $v_\phi$ respectively, then the state 
$\ket{\psi}\otimes\ket{\phi}$ has $0$-entropy vector $v_\psi+v_\phi$. This is one 
reason why we will be principally concerned with inequalities which are linear in 
the $0$-entropies of a state, and therefore geometric in the ranks.

\medskip\noindent
{\bf Remark.} The theory outlined above has a classical counterpart,
which has been studied in the past: Namely, to the joint distribution
$P_{[n]}$ of $n$ discrete random variables $X_1,\ldots,X_n$, associate for each
subset $I \subset [n]$ the cardinality $s_I$ of the support of $P_I$,
which is the marginal distribution of $X_I = (X_i:i\in I)$, so that
$\log s_I = H_0(X_I)$ is the classical $0$-R\'enyi entropy of the variables 
$X_I$.
Another way of looking at it is to observe that $s_I$ is precisely the 
size of the projection of the support of $P_{[n]}$ onto the coordinates $I$.

By introducing the $(n+1)$-party purification
\begin{equation}
  \label{classical-state}
  \ket{\psi}_{012\ldots n} 
    = \sum_{x_1\ldots x_n} \sqrt{P(x_1\ldots x_n)} \ket{x_1\ldots x_n}^0 \ket{x_1}^1 \cdots \ket{x_n}^n,
\end{equation}
we see that the $s_I$, $I\subset [n]$ are precisely the Schmidt ranks $r_I$
of this state, so that the classical case actually appears a special case
of the quantum one, for states of the form \eqref{classical-state}.

What are the inequalities satisfied by the $s_I$ of a generic distribution?
By the above observation, any constraint on the quantum ranks must
necessarily hold for the classical supports, for instance submultiplicativity
$s_{I \stackrel{.}{\cup}J} \leq s_I \, s_J$. But there is evidently more,
such as monotonicity, i.e.~$s_I \leq s_J$ for all $I \subset J$, which
does not hold for the quantum ranks $r_I$. Further, for each $J\subset [n]$
and each $1\leq k \leq |J|$ the following inequality holds \cite{SM83}
\begin{equation}
	s_J^{\binom{|J|-1}{k-1}} \leq \prod_{|I|=k} s_I,
\end{equation}
where the product is taken over all subsets $I\subset J$ of size $k$. Theorem \ref{newineq2} below
shows that the case $|J|=3$, $k=2$ is true also for quantum ranks.

\section{New rank inequalities}
\label{sec:new}
In this section we prove two new inequalities for the ranks of a four-party state.

\begin{thm} \label{newineq1}
	Let $\ket{\psi}_{ABCD}$ be a four-party quantum state. Then $r_A\leq r_{AB} \, r_{AC}$.
\end{thm}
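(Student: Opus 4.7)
The plan is to reduce to a state whose $A$-marginal is maximally mixed on its support, so that its von Neumann entropy equals $\log r_A$ exactly, and then to apply strong subadditivity \eqref{SSA}.

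Concretely, starting from the Schmidt decomposition across the $A\!:\!BCD$ cut,
\[ \ket{\psi}_{ABCD} = \sum_{i=1}^{r_A} \sqrt{\lambda_i}\,\ket{i}_A \ket{\widetilde{i}}_{BCD}, \]
I would introduce the operator $F_A = \sum_{i=1}^{r_A} (r_A \lambda_i)^{-1/2} \ket{i}\!\bra{i}$, which is invertible on $\supp(\psi_A)$, and define the ``flattened'' state
\[ \ket{\psi'} := (F_A \otimes 1_{BCD})\ket{\psi} = \frac{1}{\sqrt{r_A}} \sum_{i=1}^{r_A} \ket{i}_A \ket{\widetilde{i}}_{BCD}. \]
By construction $\ket{\psi'}$ is a unit vector and $\psi'_A = \Pi_A/r_A$ is maximally mixed on $\supp(\psi_A)$, so $S(\psi'_A) = \log r_A$.

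The next step is to observe that this invertible local operation on $A$ preserves the ranks of all marginals containing $A$. For the $AB$ marginal, $\psi'_{AB} = (F_A \otimes 1_B)\,\psi_{AB}\,(F_A^\dagger \otimes 1_B)$, and since $\supp(\psi_{AB}) \subseteq \supp(\psi_A) \otimes \hilb{B}$ by \eqref{supporteq}, conjugation by $F_A \otimes 1_B$ restricts to a bijection on $\supp(\psi_{AB})$; therefore $\rk(\psi'_{AB}) = r_{AB}$, and analogously $\rk(\psi'_{AC}) = r_{AC}$.

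Finally I apply strong subadditivity \eqref{SSA} to $\psi'_{ABC} = \trace_D \ket{\psi'}\!\bra{\psi'}$:
\[ S(\psi'_A) + S(\psi'_{ABC}) \;\leq\; S(\psi'_{AB}) + S(\psi'_{AC}). \]
Dropping the nonnegative term $S(\psi'_{ABC}) \geq 0$ on the left and using $S(\rho) \leq \log \rk(\rho)$ for $\psi'_{AB}$ and $\psi'_{AC}$ on the right yields
\[ \log r_A \;=\; S(\psi'_A) \;\leq\; \log r_{AB} + \log r_{AC}, \]
which, after exponentiating, is the claimed inequality. The only delicate point is the rank-preservation step, and it is precisely there that the elementary support inclusion \eqref{supporteq} does the essential work.
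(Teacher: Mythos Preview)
Your proof is correct and follows essentially the same approach as the paper: flatten the $A$-marginal via a local invertible filter, then apply strong subadditivity together with $S(\rho)\le\log\rk\rho$. The only cosmetic difference is that the paper first replaces $\hilb{A}$ by $\supp(\psi_A)$ so that $\psi_A^{-1/2}$ is globally invertible (making rank preservation immediate), whereas you keep the ambient space and invoke \eqref{supporteq} to justify rank preservation explicitly.
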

\begin{proof}
	By \eqref{supporteq} we see that projecting $\hilb{A}$ onto $\supp(\psi_A)$ does not change $\ket{\psi}_{ABCD}$. Consequently we may assume, without loss of generality, that the single-party density matrix $\psi_A$ has full rank, i.e. $r_A=d_A$. The Schmidt decomposition across the $A:BCD$ partition then gives us $\ket{\psi}_{ABCD} = \sum_{i=1}^{d_A} \lambda_i \ket{\alpha_i}_A \otimes \ket{\beta_i}_{BCD}$, where $\{ \ket{\alpha_i} \}_{i=1}^{d_A}$ is an orthonormal basis of $\hilb{A}$, the states $\ket{\beta_i}$ are also orthonormal, and $\lambda_i>0$ for all $i$.
	
	Consider the operator 
	$\psi_A^{-\frac12} = \sum_{i=1}^{d_A} \lambda_i^{-1} \ket{\alpha_i}\!\bra{\alpha_i}$. 
	It is an invertible linear map on $\hilb{A}$. Therefore the operator 
	$X:= \psi_A^{-\frac12}\otimes \1_B \otimes \1_C \otimes \1_D$, where $\1_B$ denotes the identity on $\hilb{B}$, is a local invertible linear map on $\hilb{ABCD}$, which will not change any of the ranks. Let $\ket{\psi'}_{ABCD}:= \frac1{\sqrt{d_A}}X\ket{\psi}_{ABCD} = \sum_{i=1}^{d_A} \frac1{\sqrt{d_A}} \ket{\alpha_i}_A \otimes \ket{\beta_i}_{BCD}$.
	
	The von Neumann entropy of the reduced state $\psi_A'$ is $\log d_A=\log r_A$. The strong subadditivity of von Neumann entropy \eqref{SSA} then gives
	\begin{equation}
	\begin{split}
		\log r_A = S(\psi_A') &\leq S(\psi_{AB}') + S(\psi_{AC}') - S(\psi_{ABC}') \\
		                    	&\leq \log r_{AB} + \log r_{AC},
	\end{split}
	\end{equation}
where we have used the fact that for any density matrix $\rho$ with rank $r$, 
$S(\rho)\leq \log r$. Taking the exponential of each side of this inequality gives the result.
\end{proof}

\begin{thm} \label{newineq2}
	Let $\ket{\psi}_{ABCD}$ be a four-party quantum state. 
	Then $r_A^2 \leq r_{AB} \, r_{AC} \, r_{BC}$.
\end{thm}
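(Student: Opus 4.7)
The plan is to mirror the strategy of Theorem~\ref{newineq1}: apply a local invertible transformation on party $A$ to put $\psi_A$ in a maximally mixed form, translate the target rank inequality into an entropy inequality for the transformed state, and then derive it by combining strong subadditivity (SSA) / weak monotonicity with subadditivity.

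First, exactly as in the proof of Theorem~\ref{newineq1}, I would reduce to the case where $\psi_A$ has full rank $d_A=r_A$, and then apply the local invertible map $X=\psi_A^{-1/2}\otimes \1_{BCD}$ (with the $1/\sqrt{d_A}$ normalization), producing a state $\ket{\psi'}_{ABCD}$ whose reduced state on $A$ is $\psi'_A=\1/d_A$. All seven bipartite Schmidt ranks are preserved, and now $S(\psi'_A)=\log d_A=\log r_A$. Since $S(\psi'_{XY})\leq\log r_{XY}$ for every bipartite marginal, it suffices to establish the von Neumann entropy inequality
\begin{equation}
    2S(A)\leq S(AB)+S(AC)+S(BC)
\end{equation}
for the pure four-partite state $\ket{\psi'}_{ABCD}$.

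To prove this entropy inequality, I would apply weak monotonicity \eqref{WM} to the tripartite mixed state $\psi'_{ABC}$ (equivalently, SSA applied to its purification $\ket{\psi'}_{ABCD}$) in two different groupings. Choosing the ``environment'' to be $C$ yields $S(A)+S(B)\leq S(AC)+S(BC)$, while choosing it to be $B$ yields $S(A)+S(C)\leq S(AB)+S(BC)$. Adding these gives
\begin{equation}
    2S(A)+S(B)+S(C)\leq S(AB)+S(AC)+2S(BC).
\end{equation}
Finally, subadditivity gives $S(BC)\leq S(B)+S(C)$, so subtracting this from the previous line produces exactly $2S(A)\leq S(AB)+S(AC)+S(BC)$, which upon exponentiation becomes $r_A^2\leq r_{AB}\,r_{AC}\,r_{BC}$.

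I do not expect a genuine obstacle here: the maximally-mixing trick on $A$ is the same device already used in Theorem~\ref{newineq1}, and once it is in place the task reduces to finding a combination of SSA-type inequalities with the right coefficients. The only small creative step is realizing that a single use of SSA (as in Theorem~\ref{newineq1}) gives only $\log r_A\leq\log r_{AB}+\log r_{AC}$, so to pick up the extra factor of $\log r_{BC}$ on the right and a factor of $2$ on the left, one must symmetrize in $B$ and $C$ by adding two weak monotonicity inequalities and then absorb the unwanted $S(B)+S(C)$ via subadditivity of $S(BC)$.
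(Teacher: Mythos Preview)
Your proposal is correct and follows essentially the same approach as the paper: apply the local filtering $\psi_A^{-1/2}$ to make $\psi'_A$ maximally mixed, then combine standard von Neumann entropy inequalities to obtain $2S(A)\leq S(AB)+S(AC)+S(BC)$ and bound each right-hand term by a log-rank. The only minor difference is in which entropy inequalities are combined: the paper adds SSA, $S(A)+S(ABC)\leq S(AB)+S(AC)$, to the Araki--Lieb inequality $S(A)\leq S(BC)+S(ABC)$ (weak monotonicity with a trivial system), so that the $S(ABC)$ terms cancel directly, whereas you add two weak-monotonicity inequalities and then invoke subadditivity to absorb the extra $S(B)+S(C)-S(BC)$; both routes yield the same entropy bound.
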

\begin{proof}
	We follow the proof above, but this time we take the entropy inequality \eqref{SSA} 
	and add to it the further inequality
	\begin{equation}
		S(\psi_A') \leq S(\psi_{BC}') + S(\psi_{ABC}'),
	\end{equation}
	which is derived from \eqref{WM} by taking the $B$ system to be trivial.
	This yields
	\begin{equation}
	\begin{split}
		2 \log r_A = 2 S(\psi_A') &\leq S(\psi_{AB}') + S(\psi_{AC}') + S(\psi_{BC}') \\
			&\leq \log r_{AB} + \log r_{AC} + \log r_{BC}.
	\end{split}
	\end{equation}
	Again, taking exponentials gives the result.
\end{proof}

We have given a simple proof of each of the inequalities above. However, the 
crucial ingredient underpinning these proofs is the strong subadditivity of 
von Neumann entropy. This is a very famous, and highly non-trivial result. 
Although many proofs of strong subadditivity are known (e.g. \cite{LR73,GPW05,Rus07,BR12}), 
none provides any intuition regarding the ranks we consider. Furthermore, the
argument we have employed is limited by the fact that by a local invertible
filtering one can make only one reduced state equal to the maximally mixed state;
indeed, if it were possible to find, for given $\ket{\psi}_{ABCD}$, another
state $\ket{\psi'}_{ABCD}$ with the same Schmidt ranks for all bipartite
partitions, but such that $\psi_A'$ and $\psi_D'$ are maximally mixed on
their respective subsystems, then strong subadditivity \eqref{SSA} and weak
monotonicity \eqref{WM} would hold for $S_0$ -- and we know that already to
be false.
It would, therefore, be desirable to give direct, self-contained proofs of 
Theorems \ref{newineq1} and \ref{newineq2}. We are able to do so for the former.

\begin{proof}[Direct proof of Theorem \ref{newineq1}]
	We begin with some preliminaries. Suppose that $\ket{\psi}_{AB}$ is a bipartite quantum state, and fix orthonormal bases $\{ \ket{i}_A \}_{i=1}^{d_A}$ and $\{ \ket{i}_B \}_{i=1}^{d_B}$ of $\hilb{A}$ and $\hilb{B}$. Then, for some complex $\alpha_{ij}$, we can write
	\begin{equation}
		\ket{\psi}_{AB} = \sum_{i=1}^{d_A} \sum_{j=1}^{d_B} \alpha_{ij} \ket{ij}_{AB} = \sum_{i=1}^{d_A} \ket{i}_A \otimes \left( \sum_{j=1}^{d_B} \alpha_{ij} \ket{j}_B \right) = \sum_{i=1}^{d_A} \ket{i}_A \otimes \ket{\alpha_i}_B,
	\end{equation}
	where we define $\ket{\alpha_i}_B := \sum_{j=1}^{d_B} \alpha_{ij} \ket{j}_B$. Now let $M:\hilb{A}\to\hilb{B}$ be the linear map such that $M\ket{i}_A = \ket{\alpha_i}_B$. Then we can write
	\begin{equation} \label{matrixrep}
		\ket{\psi}_{AB} = \sum_{i=1}^{d_A} \ket{i}_A \otimes M\ket{i}_A = (\1_A\otimes M) \ket{\phi^+}_{AA},
	\end{equation}
	where $\ket{\phi^+}_{AA} := \sum_{i=1}^{d_A} \ket{i}_A \ket{i}_A$ is the 
	(unnormalized) maximally entangled state between two copies of $\hilb{A}$.
	
	We wish to calculate the reduced density matrix $\psi_B := \trace_A \psi$, which we can do as follows
	\begin{equation}
	\begin{aligned}
		\psi_B &= \sum_{j=1}^{d_A} (\bra{j}_A \otimes \1_B) \ket{\psi}\!\bra{\psi}_{AB} (\ket{j}_A \otimes \1_B) \\
			&= \sum_{i,i',j=1}^{d_A} (\bra{j}_A \otimes \1_B) (\ket{i}_A \otimes M\ket{i}_A) (\bra{i'}_A \otimes \bra{i'}_A M^\dagger) (\ket{j}_A \otimes \1_B) \\
			&= \sum_{i,i',j=1}^{d_A} \delta_{ij}\delta_{i'j} M \ket{i}\!\bra{i'}_A M^\dagger \\
			&= \sum_{j=1}^{d_A} M \ket{j}\!\bra{j}_A M^\dagger
			 = M \left( \sum_{j=1}^{d_A} \ket{j}\!\bra{j}_A \right) M^\dagger
			 = M M^\dagger.
	\end{aligned}
	\end{equation}
	From this we can conclude that the rank of $\ket{\psi}_{AB}$ across the $A:B$ partition is $\rk M$.
	
	We now return to the problem at hand, in which we have a four-party state $\ket{\psi}_{ABCD}$. Taking the Schmidt decomposition across the $AB:CD$ partition gives
	\begin{equation}
		\ket{\psi}_{ABCD} = \sum_{k=1}^{r_{AB}} \ket{\eta_k}_{AB} \otimes \ket{\theta_k}_{CD},
	\end{equation}
	where $\{ \ket{\eta_k}_{AB} \}_{k=1}^{r_{AB}}$ and $\{ \ket{\theta_k}_{CD} \}_{k=1}^{r_{AB}}$ are sets of mutually orthogonal (unnormalized) states. According to eq.~\eqref{matrixrep}, 
	write $\ket{\eta_k}_{AB} = (\1_A\otimes R_k)\ket{\phi^+}_{AA}$ and $\ket{\theta_k}_{CD} = (\1_C \otimes S_k) \ket{\phi^+}_{CC}$. Then we have
	\begin{equation} \label{RSdecomp}
		\ket{\psi}_{ABCD} = \sum_{k=1}^{r_{AB}} \sum_{i=1}^{d_A} \sum_{j=1}^{d_C} \ket{i}_A \otimes R_k \ket{i}_A \otimes \ket{j}_C \otimes S_k \ket{j}_C.
	\end{equation}
	By the argument above, this gives
	\begin{align}
		r_A &= \rk\left( \sum_{k=1}^{r_{AB}} R_k \otimes \ket{\theta_k}_{CD} \right),   \\
		r_{AC} &= \rk\left( \sum_{k=1}^{r_{AB}} R_k \otimes S_k \right) \label{rankAC}.
	\end{align}
	
	Now consider the following observation
	\begin{equation}
		\sum_{k=1}^{r_{AB}} R_k \otimes \ket{\theta_k}_{CD} = \left( \sum_{k=1}^{r_{AB}} R_k \otimes \1_C \otimes S_k \right) \left( \1_A \otimes \ket{\phi^+}_{CC} \right),
	\end{equation}
	which implies
	\begin{equation}
		r_A = \rk\left( \sum_{k=1}^{r_{AB}} R_k \otimes \ket{\theta_k}_{CD} \right) \leq \rk\left( \sum_{k=1}^{r_{AB}} R_k \otimes \1_C \otimes S_k \right) = r_{AC}\rk(\1_C).
	\end{equation}
	Notice that if we could replace $\rk(\1_C)$ by a term less than or equal to $r_{AB}$ then we would be done. This is what the following lemma allows us to do.
	
	\begin{lemma} \label{linalglemma}
		Suppose $S_1,\ldots,S_N$ are (non-zero) linear maps $S_i:\hilb{C}\to\hilb{D}$ such that $\trace(S_i^\dagger S_j)=0$ whenever $i\neq j$. Then for some $K\leq N$ there is a rank $K$ projector $P$ on $\hilb{C}$, and a state $\ket{\phi}_{CC}$, such that the vectors $\{(P\otimes S_k)\ket{\phi}_{CC}\}_{k=1}^N$ are linearly independent.
	\end{lemma}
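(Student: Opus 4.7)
The plan is to use the operator--vector correspondence of~\eqref{matrixrep} to convert the vector linear-independence condition into an operator linear-independence condition, and then to build the required projector by a greedy, one-dimension-at-a-time construction.

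First, I would take $\ket{\phi}_{CC}$ to be the (unnormalised) maximally entangled state $\ket{\phi^+}_{CC} = \sum_i\ket{i}_C\ket{i}_C$ on two copies of $\hilb{C}$. Combining the ``ricochet'' identity $(P\otimes\1_C)\ket{\phi^+}_{CC} = (\1_C\otimes P^T)\ket{\phi^+}_{CC}$ with the parametrisation used in~\eqref{matrixrep} gives $(P\otimes S_k)\ket{\phi^+}_{CC} = (\1_C\otimes S_k P^T)\ket{\phi^+}_{CC}$, which corresponds under the operator--vector isomorphism to the operator $S_k P^T : \hilb{C}\to\hilb{D}$. Since $P^T$ is a projector of the same rank as $P$, the lemma reduces to the purely algebraic claim that there exists a subspace $V\subset\hilb{C}$ of dimension at most $N$ such that the restrictions $S_1|_V,\ldots,S_N|_V$ are linearly independent as maps $V\to\hilb{D}$; given such a $V$ one recovers the required $P$ by taking the projector onto $V$ (and undoing the transpose at the end, which does not affect the rank).

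Next, I would build $V$ greedily. Set $V_0 = \{0\}$ and track the ``space of relations''
\[
\mathcal{R}_m := \bigl\{ c\in\cpx^N : T_c \text{ vanishes on } V_m \bigr\}, \qquad T_c := \sum_{k=1}^N c_k S_k,
\]
so $\mathcal{R}_0 = \cpx^N$. Whenever $\mathcal{R}_m\neq\{0\}$, pick any non-zero $c\in\mathcal{R}_m$. The orthogonality hypothesis $\tr(S_i^\dagger S_j)=0$ for $i\neq j$ forces the non-zero $S_k$ to be linearly independent, so $T_c\neq 0$; hence there exists $\ket{u}\in\hilb{C}$ with $T_c\ket{u}\neq 0$, and automatically $\ket{u}\notin V_m$ because $T_c$ vanishes on $V_m$. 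Setting $V_{m+1} := V_m + \cpx\ket{u}$, the relation $c$ no longer holds on $V_{m+1}$, so $\mathcal{R}_{m+1}\subsetneq\mathcal{R}_m$ and $\dim\mathcal{R}_{m+1}\leq\dim\mathcal{R}_m-1$.

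Since $\dim\mathcal{R}_m$ strictly decreases from its initial value $N$, the procedure halts within $N$ steps at some $V_K$ with $K\leq N$ and $\mathcal{R}_K=\{0\}$; the restrictions $\{S_k|_{V_K}\}$ are then linearly independent, and unwinding the operator--vector isomorphism produces the projector $P$ of rank $K$ and the state $\ket{\phi}=\ket{\phi^+}_{CC}$ required by the lemma. I expect the only substantive step to be the initial translation via the maximally entangled state, which converts linear independence of $N$ vectors in $\hilb{C}\otimes\hilb{D}$ into linear independence of $N$ operators restricted to a subspace of $\hilb{C}$; once that is in place the argument is a routine dimension count, and the orthogonality hypothesis enters only through the weaker consequence that the $S_k$ are themselves linearly independent.
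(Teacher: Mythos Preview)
Your proof is correct and follows essentially the same approach as the paper: both reduce the statement, via the operator--vector correspondence, to finding a subspace $V\subset\hilb{C}$ of dimension at most $N$ on which the restrictions $S_k|_V$ are linearly independent, and both build $V$ one dimension at a time, using the linear independence of the full $S_k$ (which is all that the Hilbert--Schmidt orthogonality contributes) to break a surviving relation. The paper organises the construction as an induction on $N$, adding one map at a time and, when necessary, one further basis column (with relabelling), whereas you fix $N$ and iterate on $\dim V$ while tracking the relation space $\mathcal{R}_m$; this is slightly cleaner bookkeeping, but the underlying mechanism is the same.
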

	\begin{proof}
		Fix bases $\basis{C}=\{\ket{i}_C\}_{i=1}^{d_C}$ and $\basis{D}=\{\ket{j}_D\}_{j=1}^{d_D}$. Let $\ket{\psi_k}_{CD}:=(P\otimes S_k)\ket{\phi}_{CC}$, where $P:=\sum_{i=1}^K \ket{i}\bra{i}_C$, and $\ket{\phi}_{CC}:=\sum_{i=1}^K \ket{ii}_{CC}$. Then the $\ket{\psi_k}$ are linearly dependent if and only if there exist constants $\lambda_1,\ldots,\lambda_N$, not all zero, such that:
		\begin{equation}
		\begin{aligned}
			\sum_{k=1}^N \lambda_k \ket{\psi_k}_{CD} =0 \ 
			\iff &\ \sum_{k=1}^N \sum_{i=1}^K \ket{i}_C \otimes \lambda_k S_k \ket{i}_C = 0 \\
			\iff &\ \sum_{k=1}^N \lambda_k S_k \ket{i}_C =0 \quad \forall 1\leq i \leq K \\
			\iff &\ \text{The matrices formed by taking the first $K$ columns of each} \\
			     &\ \text{of $S_1,\ldots,S_N$ (when written with respect to $\basis{C}$ and $\basis{D}$)} \\
			     &\ \text{are linearly dependent.}
		\end{aligned}
		\end{equation}
	
		We proceed by induction. The lemma clearly holds for $N=1$. Suppose that it is true for $N=m$. Then there exists $\hat{K}\leq m$ such that the first $\hat{K}$ columns of the matrices $S_1,\ldots,S_m$ are linearly independent. Write $A^{(\hat{K})}$ for the first $\hat{K}$ columns of the matrix $A$. Consider matrix $S_{m+1}$. If $S_{m+1}^{(\hat{K})}$ is linearly independent of $S_1^{(\hat{K})},\ldots,S_m^{(\hat{K})}$ then we are done. Otherwise we have unique constants $\mu_1,\ldots,\mu_m$ such that $S_{m+1}^{(\hat{K})}=\sum_{k=1}^m \mu_k S_k^{(\hat{K})}$. Because the $S_k$ are orthogonal under the Hilbert-Schmidt inner product, they must be linearly independent, and so we cannot have $S_{m+1}=\sum_{k=1}^m \mu_k S_k$. Therefore, there must be a column in which equality does not hold. By relabelling basis vectors, we may assume that this is the $(\hat{K}+1)$-th column. Because of the uniqueness of the $\mu_k$ it follows that $S_1^{(\hat{K}+1)},\ldots,S_{m+1}^{(\hat{K}+1)}$ are linearly independent.
	\end{proof}
	
	In order to make use of this lemma, we first observe that $S_1,\ldots,S_{r_{AB}}$ are orthogonal under the Hilbert-Schmidt inner product. Indeed, since $\{ \ket{\theta_k}_{CD} \}_{k=1}^{r_{AB}}$ are mutually orthogonal states, for $i\neq j$ we have
	\begin{equation}
	\begin{aligned}
		0  = \braket{\theta_i}{\theta_j}
		  &= \sum_{k,l=1}^{d_C} (\bra{k}_C \otimes \bra{k}_C S_i^\dagger) 
		                                 (\ket{l}_C \otimes S_j \ket{l}_C) \\
		  &= \sum_{k,l=1}^{d_C} \delta_{kl} \bra{k}_C S_i^\dagger S_j \ket{l}_C 
%		  &= \sum_{k=1}^{d_C} \bra{k}_C S_i^\dagger S_j \ket{k}_C
		   = \trace S_i^\dagger S_j.
	\end{aligned}
	\end{equation}
	
	Let $K$, $P$ and $\ket{\phi}_{CC}$ be given by Lemma \ref{linalglemma}. 
	Then we can easily construct a linear map 
	$V_{CD}:\hilb{C}\otimes\hilb{D}\to\hilb{C}\otimes\hilb{D}$ which sends 
	$(P\otimes S_k)\ket{\phi}_{CC}$ to $(\1_C\otimes S_k)\ket{\phi^+}_{CC}$ for each 
	$k$. Then we can write the following:
	\begin{equation}
		\sum_{k=1}^{r_{AB}} R_k \otimes \ket{\theta_k}_{CD} 
		     = (\1_B\otimes V_{CD})\left(\sum_{k=1}^{r_{AB}} R_k \otimes P \otimes S_k\right)
		       (\1_A \otimes \ket{\phi}_{CC}).
	\end{equation}
	This implies
	\begin{equation}
	\begin{aligned}
		r_A &\leq \rk \left(\sum_{k=1}^{r_{AB}} R_k\otimes P \otimes S_k\right) \\
		    &= K \rk \left(\sum_{k=1}^{r_{AB}} R_k \otimes S_k\right) \\
	      	&\leq r_{AB} \, r_{AC},
	\end{aligned}
	\end{equation}
	and we are done.
\end{proof}

\section{Extremal states}
\label{sec:rays}
Having established two further inequalities for the ranks of multiparty states, 
it is natural to ask whether there are any more. At this point it is useful to 
switch to the perspective of the $0$-R\'enyi entropy, by taking the logarithm 
of the ranks. In this scenario, our inequalities become linear constraints on 
the set of possible entropy vectors.

Focusing on the four-party case, we have the following constraints:
\begin{equation}
\begin{aligned}
	S_0(A) &\geq 0, \\
	S_0(A) + S_0(B) &\geq S_0(AB), \\
	S_0(AB) + S_0(AC) &\geq S_0(A), \\
	S_0(AB) + S_0(AC) + S_0(BC) &\geq 2S_0(A),
\end{aligned}
\end{equation}
and all inequalities by permuting the names of the parties, and more generally
by substituting pairwise disjoint subsets of parties for $A$, $B$, $C$.

We can use symbolic computation software, for instance LRS \footnote{Available as a 
free download at {\tt http://cgm.cs.mcgill.ca/\~{ }avis/C/lrs.html}.} to find the 
extremal rays of the cone determined by these inequalities. We obtain eight
families of rays (up to permuting the parties) spanned by the vectors
listed below.

\medskip
\begin{center}
  {\begin{tabular}{c||c|c|c|c|c|c|c}
  Family & A & B & C & D & AB & AC & AD \\ \hline\hline
  1 & 1 & 1 & 0 & 0 & 0 & 1 & 1 \\ \hline
  2 & 1 & 1 & 1 & 1 & 2 & 2 & 2 \\ \hline
  3 & 1 & 1 & 1 & 1 & 1 & 1 & 0 \\ \hline
  4 & 2 & 2 & 1 & 1 & 2 & 1 & 1 \\ \hline
  5 & 2 & 2 & 2 & 1 & 2 & 1 & 1 \\ \hline
  6 & 3 & 3 & 3 & 1 & 2 & 2 & 2 \\ \hline
  7 & 2 & 2 & 2 & 1 & 3 & 1 & 1 \\ \hline
  8 & 1 & 1 & 1 & 1 & 2 & 1 & 0 \\ \hline
\end{tabular}}
\end{center}
\medskip

If we can find states with $0$-entropy vectors on, or arbitrarily close to these rays, then we can conclude that there are no more linear inequalities. Unfortunately, we were only able to do so for the first six families. Consider the following states:
\begin{equation}
\begin{aligned}
	\ket{\psi_1}_{ABCD} &= \left(\ket{00}_{AB}+\ket{11}_{AB}\right)\ket{00}_{CD}, \\
	\ket{\psi_2}_{ABCD} &= \sum_{i,j=0}^2 \ket{i}_A \ket{j}_B \ket{i+j}_C \ket{i+2j}_D,
\end{aligned}
\end{equation}
where, in the second state, addition takes place modulo $3$. These states have rank vectors:
\begin{equation}
\begin{aligned}
	r_{\psi_1}&=(2,2,1,1,1,2,2), \\
	r_{\psi_2}&=(3,3,3,3,9,9,9),
\end{aligned}
\end{equation}
from which it follows that their $0$-entropy vectors lie on rays 1 and 2. 
Now consider the states:
\begin{equation}
\begin{aligned}
	\ket{\psi_3}_{ABCD} &= \ket{\phi_d^+}_{AB} \oplus \ket{\phi_d^+}_{CD}, \\
	\ket{\psi_4}_{A_1A_2B_1B_2CD} &= \ket{\phi_d^+}_{A_1C}\ket{\phi_d^+}_{A_2D} \oplus \ket{\phi_d^+}_{B_1C} \ket{\phi_d^+}_{B_2D}, \\
	\ket{\psi_5}_{A_1A_2B_1B_2C_1C_2D} &= \ket{\phi_d^+}_{A_1C_1}\ket{\phi_d^+}_{A_2D} \oplus \ket{\phi_d^+}_{B_1C_1}\ket{\phi_d^+}_{B_2D} \oplus \ket{\phi_d^+}_{A_1C_1}\ket{\phi_d^+}_{B_1C_2}, \\
	\ket{\psi_6}_{A_1A_2A_3B_1B_2B_3C_1C_2C_3D} &= \ket{\phi_d^+}_{A_1B}\ket{\phi_d^+}_{A_2C}\ket{\phi_d^+}_{A_3D} \oplus \ket{\phi_d^+}_{AB_1}\ket{\phi_d^+}_{B_2C}\ket{\phi_d^+}_{B_3D} \\
		& \quad \oplus \ket{\phi_d^+}_{AC_1}\ket{\phi_d^+}_{BC_2}\ket{\phi_d^+}_{C_3D}.
\end{aligned}
\end{equation}
Here, we have used some notation that requires explanation. As before, the state 
$\ket{\phi_d^+}_{AB}$ is $\sum_{i=1}^d \ket{i}_A \ket{i}_B$. 
Where a system is missing from a state, it is assumed to be present, but in an 
unentangled pure state. Finally, the orthogonal sum of two states, denoted 
$\ket{\psi}\oplus\ket{\eta}$, is the superposition of $\ket{\psi}$ and $\ket{\eta}$ 
embedded in orthogonal parts of the local Hilbert spaces. For example,
\begin{equation}
\begin{aligned}
	\ket{\phi_d^+}_{AB} &= \ket{\phi_d^+}_{AB} \ket{11}_{CD} = \sum_{i=1}^d \ket{ii11}_{ABCD}, \\
	\ket{\phi_d^+}_{CD} &= \ket{11}_{AB} \ket{\phi_d^+}_{CD} = \sum_{i=1}^d \ket{11ii}_{ABCD},
\end{aligned}
\end{equation}
and we could have
\begin{equation}
	\ket{\psi_3}_{ABCD} = \sum_{i=1}^d \ket{ii11}_{ABCD} + \sum_{i=1}^d \ket{d+1}_A \ket{d+1}_B \ket{d+i}_C \ket{d+i}_D.
\end{equation}
Notice that if states $\ket{\psi}$ and $\ket{\eta}$ have rank vectors $r_\psi$ and $r_\eta$,
then $\ket{\psi}\oplus\ket{\eta}$ has rank vector $r_\psi + r_\eta$.
Using this fact, we can see that states $\ket{\psi_3},\ldots,\ket{\psi_6}$ have rank vectors:
\begin{equation}
\begin{aligned}
	r_{\psi_3} &= (d+1, d+1, d+1, d+1, 2, 2d, 2d), \\
	r_{\psi_4} &= (d^2+1, d^2+1, 2d, 2d, 2d^2, 2d, 2d), \\
	r_{\psi_5} &= (d^2+d+1, d^2+d+1, d^2+2d, 2d+1, 3d^2, 3d, 3d), \\
	r_{\psi_6} &= (d^3+2d, d^3+2d, d^3+2d, 3d, 3d^2, 3d^2, 3d^2).
\end{aligned}
\end{equation}
It is clear that for large $d$, the leading term in each component will dominate, 
and for the $0$-entropy (once we have taken the log) only the leading exponent matters.
More precisely, let $v_{\psi}$ be the $0$-entropy vector of state $\ket{\psi}$. 
Then, $\lim_{d \to \infty} \frac1{\log d} v_{\psi_i}$ is the $i$th vector in the table 
above, for each $i=3,4,5,6$.

For these four rays, we were unable to find states with $0$-entropy vectors actually on the ray. Interestingly, in the case of ray 3, it is easy to see that such a state cannot be found. This is because, for any state $\ket{\psi}$ on the ray, $S_0(AD)=0$, which implies that the rank across the $AD:BC$ partition is 1, so we can write $\ket{\psi}_{ABCD}=\ket{\eta}_{AD}\ket{\theta}_{BC}$. It follows that $r_{AB}=r_A r_B$ and so $S_0(AB)=S_0(A)+S_0(B)$, which does not hold for a state on ray 3. 
Similarly, one can show that it is not possible to find states on ray 6, but we do not know this 
for rays 4 and 5.

\begin{rmk}
	Notice that for large enough $d$, $\ket{\psi_3},\ket{\psi_4},\ket{\psi_5}$ and $\ket{\psi_6}$ all violate the inequality
	\begin{equation}
		r_Ar_Br_C \leq r_{AB}r_{AC}r_{AD},
	\end{equation}
	which had been conjectured in \cite{HdV13}. 
	Furthermore, all four provide further counterexamples to
	strong subadditivity.
\end{rmk}

\section{Another inequality?}
\label{sec:hypothesis}
In the previous section we took our known inequalities for the $0$-entropy, and computed the corresponding set of extremal rays. We then demonstrated that 6 of the 8 families of these extremal rays can be approximated by $0$-entropy vectors. However, we were unable to find such a construction for rays 7 and 8.

Now we apply the process in reverse. We have found 6 families of $0$-entropy vectors which are extremal rays. We can again use the LRS software to compute the set of inequalities to which they correspond. It turns out that this set is just the known inequalities, with one further inequality added, which we present as an hypothesis.
\begin{hyp} \label{rankhyp}
	Let $\ket{\psi}_{ABCD}$ be a four-party quantum state. Then $r_{BC}\leq r_{AB} \, r_{AC}$.
\end{hyp}
If we could prove this hypothesis, then we would have a complete picture of the four-party linear inequalities for the $0$-entropy.

Before presenting some partial results towards this hypothesis, we first write it in a different form.
\begin{hyp} \label{matrixhyp}
	Let $R_1,\ldots,R_K$ be $m_1 \times n_1$ complex matrices, and let $S_1,\ldots,S_K$ be $m_2 \times n_2$ complex matrices. Then
	\begin{equation}
		\rk\left( \sum_{i=1}^K R_i \otimes S_i^T \right) 
		       \leq K \rk\left( \sum_{i=1}^K R_i \otimes S_i \right).
	\end{equation}
\end{hyp}

\begin{lemma}
	Hypothesis \ref{rankhyp} is equivalent to Hypothesis \ref{matrixhyp}.
\end{lemma}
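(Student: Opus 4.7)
The plan is to build a dictionary, based on the matrix representation developed in the direct proof of Theorem \ref{newineq1}, which converts between four-party pure states $\ket{\psi}_{ABCD}$ and pairs of $K$-tuples of matrices $(R_k, S_k)$. The key identifications, for a state written in the form \eqref{RSdecomp}, are
\[
  r_{AC} = \rk\Bigl(\sum_{k=1}^K R_k \otimes S_k\Bigr)
  \quad\text{and}\quad
  r_{BC} = \rk\Bigl(\sum_{k=1}^K R_k \otimes S_k^T\Bigr).
\]
The first equality is precisely \eqref{rankAC}. I would establish the second by repeating the matrix-representation computation for the $BC:AD$ partition in place of $AC:BD$; swapping the roles of $B$ and $C$ between the two halves of the bipartition interchanges the row and column indices of the $S_k$ factor, and that is exactly what implements the transpose.

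For the direction \emph{Hypothesis \ref{matrixhyp} $\Rightarrow$ Hypothesis \ref{rankhyp}}, I would start from an arbitrary $\ket{\psi}_{ABCD}$, take its Schmidt decomposition across $AB:CD$ with exactly $K = r_{AB}$ terms, and invoke the $R_k, S_k$ construction of \eqref{RSdecomp}. Plugging these data into Hypothesis \ref{matrixhyp} and translating via the two identifications above yields $r_{BC} \leq r_{AB}\, r_{AC}$ immediately.

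For the converse, I would start from arbitrary $m_1 \times n_1$ matrices $R_1,\ldots,R_K$ and $m_2 \times n_2$ matrices $S_1,\ldots,S_K$, set $\dim A = n_1$, $\dim B = m_1$, $\dim C = n_2$, $\dim D = m_2$, and assemble the state
\[
  \ket{\psi}_{ABCD} := \sum_{k=1}^K \sum_{i=1}^{n_1} \sum_{j=1}^{n_2} \ket{i}_A \otimes (R_k\ket{i})_B \otimes \ket{j}_C \otimes (S_k\ket{j})_D.
\]
The two rank identifications still apply. Moreover, this state is manifestly a sum of $K$ product vectors across the $AB:CD$ partition, so $r_{AB} \leq K$ without any need for orthogonality among the $R_k$ or $S_k$. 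Hypothesis \ref{rankhyp} then delivers $\rk(\sum_k R_k \otimes S_k^T) = r_{BC} \leq r_{AB}\, r_{AC} \leq K \rk(\sum_k R_k \otimes S_k)$, which is Hypothesis \ref{matrixhyp}.

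There is no deep obstacle; the content of the lemma is essentially a bookkeeping translation. The one point to watch is that in the converse direction the input matrices $R_k, S_k$ are completely generic, and in particular need not be orthogonal under the Hilbert--Schmidt inner product as they were in the Schmidt-decomposition setting. It is precisely because the bound $r_{AB} \leq K$ (rather than the equality $r_{AB} = K$) is what feeds into Hypothesis \ref{rankhyp} that this lack of orthogonality causes no trouble.
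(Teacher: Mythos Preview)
Your proposal is correct and follows essentially the same route as the paper: build the state--matrix dictionary from \eqref{RSdecomp}, identify $r_{AC}$ and $r_{BC}$ with the ranks of $\sum_k R_k\otimes S_k$ and $\sum_k R_k\otimes S_k^T$, and use $K=r_{AB}$ (forward) and $r_{AB}\leq K$ (converse). The only cosmetic differences are that the paper obtains the transposed identification by computing $r_{AD}$ via the identity $(\1_C\otimes S_k)\ket{\phi^+}_{CC}=(S_k^T\otimes\1_C)\ket{\phi^+}_{DD}$ (and then uses $r_{BC}=r_{AD}$), and it phrases the converse as a contrapositive; your heuristic ``swap $B$ and $C$'' for the transpose is a little imprecise---the clean statement is that moving from $AC{:}BD$ to $AD{:}BC$ swaps $C\leftrightarrow D$, which transposes $S_k$---but the conclusion is right.
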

\begin{proof}
	\ref{matrixhyp}$\implies$\ref{rankhyp}: Let $\ket{\psi}_{ABCD}$ be a four-party quantum state, and let us write it in the same way as \eqref{RSdecomp}:
	\begin{equation}
		\ket{\psi}_{ABCD} = \sum_{k=1}^{r_{AB}} \sum_{i=1}^{d_A} \sum_{j=1}^{d_C} \ket{i}_A \otimes R_k \ket{i}_A \otimes \ket{j}_C \otimes S_k \ket{j}_C.
	\end{equation}
	Now, instead of considering $R_k, S_k$ as linear maps, we consider them as matrices written in the standard bases. From eq.~\eqref{rankAC} we have
	\begin{equation}
		r_{AC} = \rk\left( \sum_{k=1}^{r_{AB}} R_k \otimes S_k \right)
	\end{equation}
	Further, by using the identity 
	$(\1_C\otimes S_k)\ket{\phi^+}_{CC} = (S_k^T \otimes \1_C)\ket{\phi^+}_{DD}$, we can write
	\begin{equation}
		\ket{\psi}_{ABCD} = \sum_{k=1}^{r_{AB}} \sum_{i=1}^{d_A} \sum_{j=1}^{d_C} \ket{i}_A \otimes R_k \ket{i}_A \otimes S_k^T\ket{j}_D \otimes \ket{j}_D,
	\end{equation}
	from which it follows that
	\begin{equation}
		r_{AD} = \rk\left( \sum_{k=1}^{r_{AB}} R_k \otimes S_k^T \right).
	\end{equation}
	Hypothesis \ref{matrixhyp} then implies $r_{BC}\leq r_{AB} \, r_{AC}$ for $\ket{\psi}$.

	\ref{rankhyp}$\implies$\ref{matrixhyp}: Suppose that Hypothesis \ref{matrixhyp} is false. Then there exist matrices $R_k$, $S_k$ for $1\leq k \leq K$ such that
	\begin{equation}
		\rk\left( \sum_{i=1}^K R_i \otimes S_i^T \right) > K \rk\left( \sum_{i=1}^K R_i \otimes S_i \right).
	\end{equation}
	Consider the quantum state
	\begin{equation}
		\ket{\psi}_{ABCD} = \sum_{k=1}^{K} \sum_{i=1}^{d_A} \sum_{j=1}^{d_C} \ket{i}_A \otimes R_k \ket{i}_A \otimes \ket{j}_C \otimes S_k \ket{j}_C.
	\end{equation}
	For this state we have
	\begin{align}
		r_{AC} &= \rk\left( \sum_{k=1}^{K} R_k \otimes S_k \right), \\
		r_{AD} &= \rk\left( \sum_{k=1}^{K} R_k \otimes S_k^T \right),
	\end{align}
	and, since $r_{AB}=\dim\vecspan\{ (\1_C\otimes S_k)\ket{\phi^+}_{CC} : 1\leq k \leq K \}$, 
	we have $r_{AB}\leq K$. Putting this together we have a state $\ket{\psi}$ for which
	\begin{equation}
		r_{BC}=\rk\left( \sum_{k=1}^{K} R_k \otimes S_k^T \right) > K \rk\left( \sum_{i=1}^K R_i \otimes S_i \right) \geq r_{AB} \, r_{AC},
	\end{equation}
	so Hypothesis \ref{rankhyp} must also be false.
\end{proof}

\begin{thm}
	Let $\ket{\psi}_{ABCD}$ be a four-party quantum state, with $r_{AB}\leq 2$. 
	Then $r_{BC}\leq r_{AB} \, r_{AC}$.
\end{thm}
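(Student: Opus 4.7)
The argument splits by $r_{AB}$. For $r_{AB}=1$, the Schmidt decomposition forces $\ket{\psi}=\ket{\alpha}_{AB}\otimes\ket{\beta}_{CD}$, so $\psi_{BC}$ and $\psi_{AC}$ are both tensor products of reductions of $\ket{\alpha}$ and $\ket{\beta}$, each with rank $r_A r_D$; hence $r_{BC}=r_{AC}$, and the inequality holds with equality. The real content is the case $r_{AB}=2$, which I would dispatch via the equivalence with Hypothesis~\ref{matrixhyp} established in the preceding lemma: it suffices to prove, for arbitrary matrices $R_1,R_2,S_1,S_2$,
\begin{equation*}
  \rk N \leq 2\,\rk M, \qquad M:=R_1\otimes S_1+R_2\otimes S_2,\quad N:=R_1\otimes S_1^T+R_2\otimes S_2^T.
\end{equation*}

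The plan is to exploit the fact that invertible local filters, which preserve all Schmidt ranks of the ambient state and hence both $\rk M$ and $\rk N$, act on the pair $(R_1,R_2)$ by $R_k\mapsto PR_kQ$ with $P,Q$ invertible, and independently on $(S_1,S_2)$ by $S_k\mapsto US_kV$. Invoking the classical Kronecker canonical form for matrix pencils, we may therefore assume that each of $\lambda R_1+R_2$ and $\lambda S_1+S_2$ is a direct sum of standard blocks: regular Jordan blocks (at finite eigenvalues and at $\infty$) and singular Kronecker blocks $L_k,L_k^T$. Both $M$ and $N$ then split as direct sums indexed by pairs consisting of one $R$-block and one $S$-block, and rank-additivity reduces the statement to a finite block-pair-by-block-pair check.

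Two observations cover all types. \emph{(i)} For a pair of regular Jordan blocks, $\rk M=\rk N$ exactly: after subtracting a scalar multiple of the identity on the $R$-side to pass to the nilpotent case, $M$ takes block-upper-bidiagonal form with $S_1$ on the diagonal and $S_2$ on the superdiagonal, and a reversal of block rows and block columns identifies $M^T$ with the analogous matrix built from $S_1^T,S_2^T$, so $\rk N=\rk M^T=\rk M$. \emph{(ii)} For any block-pair involving a singular Kronecker block on at least one side, $M$ and $N$ collapse to a comparison of a ``stacked'' matrix $\left(\begin{smallmatrix}X_1\\X_2\end{smallmatrix}\right)$ with a ``side-by-side'' matrix $(X_1\ X_2)$, where $X_k\in\{S_k,S_k^T\}$, and the elementary bound
\begin{equation*}
  \rk(X_1\ X_2)\leq \rk X_1+\rk X_2\leq 2\max(\rk X_1,\rk X_2)\leq 2\,\rk\!\left(\begin{smallmatrix}X_1\\X_2\end{smallmatrix}\right)
\end{equation*}
supplies the factor of $2$, and is tight (e.g.\ for $X_1=(1,0)$, $X_2=(0,1)$).

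The main obstacle I anticipate is the tedium of the cross-block case analysis, in particular the mixed Jordan--singular combinations in which $M$ inherits a bidiagonal shape from one side of the tensor product and a stacked-versus-side-by-side shape from the other. The factor of $2$ enters precisely through the singular Kronecker blocks, and the restriction $r_{AB}\leq 2$ is what limits the pencil to two matrices and keeps the Kronecker canonical form analysis within reach.
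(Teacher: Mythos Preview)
Your approach via the Kronecker canonical form for matrix pencils is correct in outline and genuinely different from the paper's proof. The paper argues by contradiction on a minimal counterexample: it shows that block-wise row and column operations (which preserve both $\rk M$ and $\rk M^\Gamma$) can always peel off a block row or column while maintaining the violation $\rk M^\Gamma > 2\rk M$, forcing an infinite descent. Your route instead invokes a classical structural result to reduce both pencils to direct sums of canonical blocks, after which additivity of rank over the resulting block-pairs reduces the claim to a finite list of local checks. Your approach buys a more transparent explanation of \emph{where} the factor $2$ comes from (the singular Kronecker blocks) and why equality $\rk M=\rk N$ holds otherwise; the paper's argument is more elementary in that it avoids quoting Kronecker's theorem, and its block-operation machinery stays closer to the language already set up for Theorem~\ref{newineq1}.

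One point does need tightening. Your observation (i) is fine, and in fact holds for a regular $R$-block paired with \emph{any} $S$-block, not just a regular one: writing $R_1=I_p$, $R_2=\mu I_p+N_p$, one has $M=I_p\otimes(S_1+\mu S_2)+N_p\otimes S_2$, and conjugating $M^T$ by the block-reversal permutation on the $p$ factor sends $N_p^T\mapsto N_p$ and yields $N$, so $\rk N=\rk M$. But observation (ii) as you state it---that a singular block ``collapses $M$ and $N$ to a stacked versus side-by-side comparison''---is only literally true for $L_1$ and $L_1^T$. For $L_\epsilon$ with $\epsilon\geq 2$ the block-bidiagonal $M$ does not reduce to a single $(S_1\ S_2)$. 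The fix is a short induction: writing $M_\epsilon=\left(\begin{smallmatrix} M_{\epsilon-1} & 0 \\ * & S_2\end{smallmatrix}\right)=\left(\begin{smallmatrix} S_1 & * \\ 0 & M_{\epsilon-1}\end{smallmatrix}\right)$ gives $\rk M_\epsilon\geq \rk M_{\epsilon-1}+\max(\rk S_1,\rk S_2)$, hence $\rk M_\epsilon\geq \epsilon\max(\rk S_1,\rk S_2)$; on the other side, $\rk N_\epsilon\leq \epsilon\,\rk(S_1^T\ S_2^T)\leq \epsilon(\rk S_1+\rk S_2)\leq 2\epsilon\max(\rk S_1,\rk S_2)$. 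This handles all singular--singular pairs, and together with the extended observation (i) covers every block-pair without further ``cross'' analysis.
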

\begin{proof}
	By the argument above, it suffices to prove Hypothesis \ref{matrixhyp} for the cases $K=1$ and $K=2$.
	
	$K=1$: $\rk R_1\otimes S_1  = (\rk R_1)(\rk S_1) = (\rk R_1)(\rk S_1^T) = \rk R_1\otimes S_1^T$,
	using the fact that the row rank and the column rank of a matrix are the same, and equal to its rank.
	
	$K=2$: Let $M=R_1\otimes S_1 + R_2\otimes S_2$, where $R_1,R_2$ are $m_1\times n_1$ complex matrices, and $S_1,S_2$ are $m_2\times n_2$ complex matrices. This means we can write
	\begin{equation}
		M=\left( \begin{array}{c|c|c|c}
			M_{11} & M_{12} & \ldots & M_{1n_1} \\ \hline
			M_{21} & M_{22} & \ldots & M_{2n_1} \\ \hline
			\vdots & \vdots & \ddots & \vdots \\ \hline
			M_{m_11} & M_{m_12} & \ldots & M_{m_1n_1}
		\end{array} \right),
	\end{equation}
	where $M_{ij} = (R_1)_{ij} S_1 + (R_2)_{ij} S_2$. We refer to the matrices $M_{ij}$ as the `blocks' of $M$. Let $M^\Gamma$ denote the partial transpose of $M$, which is the matrix obtained by taking the transpose of each block of $M$. We aim to show that $\rk M^\Gamma \leq 2 \rk M$.
	
	Notice the following:
	\begin{enumerate}[(i)]
		\item Let $U:=\vecspan\{S_1,S_2\}$. Then $U$ is a 2-dimensional vector space of matrices, and $M_{ij}\in U$ for all $i,j$.
		\item Let $V:=\mathscr{C}(S_1 | S_2)$, where $\mathscr{C}(A)$ denotes the span of the columns of matrix $A$, and $S_1|S_2$ is a shorthand for the block matrix $(S_1 | S_2)$. Then every column of each $M_{ij}$ is in $V$.
		\item Suppose that $T_1,T_2$ are linearly independent blocks of $M$. Then clearly we have $\vecspan\{T_1,T_2\}=U$. Further, by applying elementary column operations we obtain
			\begin{equation}
			\begin{aligned}
				\mathscr{C}(T_1 | T_2) &= \mathscr{C}(T_1 | T_2 | 0 | 0) \\
					&= \mathscr{C}(T_1 | T_2 | S_1 | S_2) \\
					&= \mathscr{C}(0 | 0 | S_1 | S_2) \\
					&= \mathscr{C}(S_1 | S_2) \\
					&= V.
			\end{aligned}
			\end{equation}
		In particular, we must have $\dim V \leq \rk  T_1 + \rk T_2$,
		and hence $\rk T_i \geq \frac12 \dim V$ for at least one value of $i$.
		\item Each row of $M^\Gamma$ is in the space $V\oplus V\oplus \ldots \oplus V$ 
		(where there are $n_1$ copies of $V$) and so, for example, $\rk M^\Gamma \leq n_1 \dim V$.
	\end{enumerate}
	
	Suppose that $E$ is an $n_1 \times n_1$ invertible matrix, and let $\1$ be 
	the $n_2 \times n_2$ identity matrix. 
	Then, $\rk M(E\otimes \1) = \rk M$ and 
	$\rk [M(E\otimes \1)]^\Gamma = \rk M^\Gamma (E\otimes \1) = \rk M^\Gamma$, 
	so, if we choose to, we can replace $M$ with $M(E\otimes \1)$ for any invertible 
	$E$. In the case where $E$ performs elementary column operations, the effect of 
	$(E\otimes \1)$ acting on $M$ on the right is to perform block-wise column 
	operations on $M$. (By an analogous argument, we may also perform block-wise 
	row operations on $M$.)
	
	Suppose that $M_{11}$ and $M_{12}$ are linearly independent matrices. Then, for all 
	$j\geq 3$, $M_{1j}\in \vecspan\{M_{11}, M_{12}\}=U$, and by applying block-wise column 
	operations we may assume (as far as the ranks are concerned) that $M$ has the form
	\begin{equation}
		M=\left(\begin{array}{c|c|c|c|c}
		M_{11} & M_{12} & 0 & \ldots & 0 \\ \hline
		M_{21} & M_{22} & M_{23} & \ldots & M_{2n_1} \\ \hline
		\vdots & \vdots & \vdots & \ddots & \vdots \\ \hline
		M_{m_11} & M_{m_12} & M_{m_13} & \ldots & M_{m_1n_1}
		\end{array}\right).
	\end{equation}
	Let
	\begin{equation}
		M':=\left(\begin{array}{c|c|c}
		M_{23} & \ldots & M_{2n_1} \\ \hline
		\vdots & \ddots & \vdots \\ \hline
		M_{m_13} & \ldots & M_{m_1n_1}
		\end{array}\right),
	\end{equation}
	and consider the following inequalities, which hold for arbitrary block matrices:
	\begin{equation}
		\rk A + \rk C \leq \rk\left(\begin{array}{c|c} A & 0 \\ \hline B & C \end{array}\right) 
		              \leq \rk\left(\begin{array}{c} A \\ \hline B \end{array}\right) + \rk C.
	\end{equation}
	Applying the second inequality to $M^\Gamma$, and using the same argument as 
	comment (iv) above, we obtain
	\begin{equation}
		\rk M^\Gamma \leq 2 \dim V + \rk M'^\Gamma.
	\end{equation}
	Applying the first inequality to $M$ we obtain
	\begin{equation}
	\begin{aligned}
		\rk M &\geq \rk(M_{11} | M_{12}) + \rk M' \\
			  &=    \dim V + \rk M'.
	\end{aligned}
	\end{equation}
	Therefore, if $M$ is a counterexample to Hypothesis \ref{matrixhyp}, i.e. 
	$\rk M^\Gamma > 2\rk M$, then $M'$ is also a counterexample.
	
	Suppose, for contradiction, that a counterexample to Hypothesis \ref{matrixhyp} 
	exists, with $K=2$. Let $M$ be a minimal such counterexample, in the sense that 
	$m_1+n_1$ takes the smallest possible value. By the argument above we may assume 
	that $M_{11}$ and $M_{12}$ are not linearly independent. Further, by applying 
	block-wise row and column operations, we may assume that no two blocks of $M$ 
	in the same row are linearly independent.
	
	Now, $M$ must contain 2 linearly independent blocks, or else it would be 
	possible to write $M=R\otimes S$, which implies $\rk M^\Gamma =\rk M$. Suppose 
	$T_1,T_2$ are linearly independent blocks. By comment (iii) above, we may assume 
	that $\rk T_1 \geq\frac12 \dim V$. By swapping block-wise rows and columns, we 
	may also assume that $T_1$ is $M_{11}$. Since each row has only one linearly 
	independent block, using further block-wise column operations, we can assume 
	the matrix has the form
	\begin{equation}
		M=\left(\begin{array}{c|c|c|c}
		T_1 & 0 & \ldots & 0 \\ \hline
		M_{21} & M_{22} & \ldots & M_{2n_1} \\ \hline
		\vdots & \vdots & \ddots & \vdots \\ \hline
		M_{m_11} & M_{m_12} & \ldots & M_{m_1n_1}
		\end{array}\right).
	\end{equation}
	Now,
	\begin{equation}
		\rk M^\Gamma \leq \dim V + \rk M'^\Gamma,
	\end{equation}
	where $M'$ is the part of the matrix below the blocks of zeroes. Also,
	\begin{equation}
		\rk M \geq \rk T_1 + \rk M' \geq \frac12 \dim V + \rk M'.
	\end{equation}
	Since $\rk M^\Gamma  > 2\rk M$, this gives
	\begin{equation}
		\dim V + \rk M'^\Gamma > 2\left( \frac12 \dim V + \rk M' \right),
	\end{equation}
	and so
	\begin{equation}
		\rk M'^\Gamma > 2 \rk M',
	\end{equation}
	which is a contradiction to the assumption that $M$ was a minimal counterexample. 
	Hence there can be no counterexample and the theorem is proved.
\end{proof}

\section{Conclusions}
\label{sec:conclusion}
We have introduced the problem of determining the universal inequalities
between the ranks of partial traces of a general $n$-party pure state.
Specifically, taking logarithms, the linear inequalities between the
$0$-R\'enyi entropies became the object of our study, and we showed that
by using strong subadditivity for the von Neumann entropy we can derive
two new inequalities for the ranks.

The search for further inequalities lead, in the case of $n=4$ parties,
to several interesting families of states, and an hypothetical third
inequality, which if true, would result in a complete description of the
linear inequalities for the $0$-R\'enyi entropy. However, a proof or refutation
of this hypothesis remains the major open problem of our work.

Note that we were able to give purely algebraic proofs for all the inequalities
we found, except Theorem~\ref{newineq2}; to go beyond statements derived
from strong subadditivity, it seems fruitful to develop an algebraic proof
of Theorem~\ref{newineq2}, too, but this has eluded us so far.

Returning to the basic setup of section~\ref{sec:ranks}, we also note
that it is far from clear in which sense $\Omega_n$, the set of
$0$-entropy vectors, is best described by linear inequalities. Obviously,
$\Omega_n$ is not a cone since it is a discrete set (a certain subset
of $\log \mathbb{N}^{2^{n-1}-1}$), so this is a valid and important
question. Possible answers are suggested by the case $n=3$, where we
observed that the closed cone $\mathcal{C}_3$ generated by $\Omega_3$ has
the property that $\Omega_3 = \mathcal{C}_3 \cap \log \mathbb{N}^3$. The
analogue of this is not true for general $n$. However, could it be at least the
case that every log-integer point in the \emph{interior} of $\mathcal{C}_n$
is in $\Omega_n$? Or if not that, every interior log-integer point of 
sufficiently large norm? Any such statement would provide at least partial
justification for focusing on the linear inequalities, and we refer them
to the careful attention of the reader.

Finally, we close this discussion with a suggestion for further extension
of our theory: While here we considered the Schmidt rank only for all bipartitions,
tensor rank is a natural multi-party generalization (see~\cite{Strassen}
and \cite{BCS-book} for the general concept, and \cite{EB-multischmidt}
for its appearance in quantum information), so we might be tempted to
associate to each state vector $\ket{\psi}$ a vector of tensor ranks, one for
each partition of the ground set $[n]$ into arbitrarily many parts. Their
number is known as the \emph{Bell number}.
It is a very interesting, yet wide-open problem, to find the universal 
inequalities between the tensor ranks of the various partitions of the 
$n$ parties.

\section*{Acknowledgments}

The authors are indebted to their respective parents and grandparents,
great-grandparents and other antecedents, separate or common, for giving 
them the opportunity to walk the earth at this time. 
We also thank Mil\'an Mosonyi for useful conversations on the topic of this paper.

JC acknowledges support by the U.K. EPSRC.
MH acknowledges support by the EC Marie Curie fellowship ``QUACOCOS''.
NL and AW acknowledge support by the EC STREP ``QCS''.
AW furthermore acknowledges financial support by the ERC (Advanced Grant ``IRQUAT'')
and the Philip Leverhulme Trust.

\bibliography{References}
\bibliographystyle{unsrt}
\end{document}